\title{\huge Structure functions and minimal path sets}
\author{Jean-Luc Marichal\footnote{Jean-Luc Marichal is with the Mathematics Research Unit, FSTC, University of Luxembourg, 6, rue Coudenhove-Kalergi, L-1359 Luxembourg, Luxembourg. Email: jean-luc.marichal[at]uni.lu}}
\date{Revised version, September 30, 2015}
\begin{document}
\maketitle

\theoremstyle{plain}
\newtheorem{theorem}{Theorem}
\newtheorem{lemma}[theorem]{Lemma}
\newtheorem{proposition}[theorem]{Proposition}
\newtheorem{corollary}[theorem]{Corollary}
\newtheorem{fact}[theorem]{Fact}
\newtheorem*{main}{Main Theorem}

\theoremstyle{definition}
\newtheorem{definition}[theorem]{Definition}
\newtheorem{example}{Example}
\newtheorem{algorithm}{Algorithm}

\theoremstyle{remark}
\newtheorem*{conjecture}{onjecture}
\newtheorem{remark}{Remark}
\newtheorem{claim}{Claim}

\newcommand{\N}{\mathbb{N}}
\newcommand{\R}{\mathbb{R}}
\newcommand{\Q}{\mathbb{Q}}
\newcommand{\Vspace}{\vspace{2ex}}
\newcommand{\bfx}{\mathbf{x}}
\newcommand{\bfy}{\mathbf{y}}
\newcommand{\bfz}{\mathbf{z}}
\newcommand{\bfh}{\mathbf{h}}
\newcommand{\bfe}{\mathbf{e}}
\newcommand{\bfs}{\mathbf{s}}
\newcommand{\bfS}{\overline{\mathbf{S}}}
\newcommand{\bfd}{\mathbf{d}}
\newcommand{\os}{\mathrm{os}}
\newcommand{\dd}{\,\mathrm{d}}

\begin{abstract}
In this short note we give and discuss a general multilinear expression of the structure function of an arbitrary semicoherent system in terms of its minimal path and cut sets. We also examine the link between the number of minimal path and cut sets consisting of 1 or 2 components and the concept of structure signature of the system.
\end{abstract}

\Vspace

\noindent{\bf Keywords:} System reliability, semicoherent system, structure function, reliability function, minimal path and cut sets.

\section*{Notation}

\begin{tabbing}
$[n]$  \hspace{8ex}\= set $\{1,\ldots,n\}$ \\
$C$ \> set of components of the system\\
$\phi(\bfx)$ \> structure function of the system\\
$\phi^D(\bfx)$ \> dual structure function \\
$h(\mathbf{p})$ \> reliability function of the system\\
$\mathbf{s}$ \> signature of the system \\
$s_k$ \> $k$-th coordinate of $\mathbf{s}$
\end{tabbing}

\section{Introduction}

Consider an $n$-component system $(C,\phi)$, where $C$ is the set $[n]=\{1,\ldots,n\}$ of its components and $\phi\colon\{0,1\}^n\to\{0,1\}$ is its structure function which expresses the state of the system in terms of the states of its components. We assume that the system is semicoherent, which means that the structure function is nondecreasing in each variable and satisfies the conditions $\phi(0,\ldots,0)=0$ and $\phi(1,\ldots,1)=1$.

Throughout we identify Boolean $n$-vectors $\bfx\in\{0,1\}^n$ and subsets $A\subseteq [n]$ in the usual way, that is, by setting $x_i=1$ if and only if $i\in A$. This identification enables us to use the same symbol to denote both a function $f\colon\{0,1\}^n\to\R$ and the corresponding set function $f\colon 2^{[n]}\to\R$ interchangeably. For instance, we write $\phi(0,\ldots,0)=\phi(\varnothing)$ and $\phi(1,\ldots,1)=\phi(C)$.

As a Boolean function, the structure function can always be written in the multilinear form
\begin{eqnarray}
\phi(\bfx) &=& \phi(x_1,\ldots,x_n)\nonumber\\
 &=& \sum_{A\subseteq C}\phi(A)\, \prod_{i\in A}x_i\,\prod_{i\in C\setminus A}(1-x_i).\label{eq:SelfDescForm}
\end{eqnarray}
Since the coefficients in this form are exactly the structure function values, we will refer to this form as the \emph{self-descriptive} form of the structure function. By expanding the second product in Eq.~(\ref{eq:SelfDescForm}) and then collecting terms, we obtain the \emph{simple} form of the structure function, namely
\begin{equation}\label{eq:SimpleForm}
\phi(\bfx) ~=~ \sum_{A\subseteq C}d(A)\, \prod_{i\in A}x_i{\,},
\end{equation}
where the link between the new coefficients $d(A)$ and the values $\phi(A)$, which can be obtained from the M\"obius inversion theorem, is given through the following linear conversion formulas (see, e.g., \cite[p.~31]{Ram90})
$$
\phi(A) ~=~ \sum_{B\subseteq A}d(B)
$$
and
$$
d(A) ~=~ \sum_{B\subseteq A}(-1)^{|A|-|B|}\, \phi(B)\, .
$$

Recall that a \emph{path set} of the system is a component subset $P\subseteq C$ such that $\phi(P)=1$. A path set $P$ of the system is said to be \emph{minimal} if $\phi(P')=0$ for every $P'\varsubsetneq P$. It is well known \cite[Ch.~2]{BarPro81} that if $P_1,\ldots, P_r$ denote the minimal path sets of the system, then
\begin{equation}\label{eq:sa7fd5ds}
\phi(\bfx) ~=~ \coprod_{j\in [r]}{\,}\prod_{i\in P_j}x_i ~=~ 1-\prod_{j\in [r]}\bigg(1-\prod_{i\in P_j}x_i\bigg),
\end{equation}
where $[r]=\{1,\ldots,r\}$ and $\amalg$ is the coproduct operation (i.e., the dual of the product operation) defined by $\amalg_i x_i=1-\Pi_i(1-x_i)$.

\begin{example}\label{ex:Bridge}
Consider the bridge structure as indicated in Figure~\ref{fig:bs}. This structure is characterized by four minimal path sets, namely $P_1=\{1,4\}$, $P_2=\{2,5\}$, $P_3=\{1,3,5\}$, and $P_4=\{2,3,4\}$. Equation~(\ref{eq:sa7fd5ds}) then shows that the structure function is given by
\begin{equation}\label{eq:BridgeAmalg}
\phi(x_1,\ldots,x_5) ~=~ x_1\, x_4\amalg x_2\, x_5\amalg x_1\, x_3\, x_5\amalg x_2\, x_3\, x_4{\,}.
\end{equation}
The simple form of the structure function can be easily computed by expanding the coproducts in (\ref{eq:BridgeAmalg}) and simplifying the resulting algebraic expression using $x_i^2=x_i$. We then obtain
\begin{eqnarray*}
\lefteqn{\phi(x_1,\ldots,x_5)}\\
&=& 1-(1-x_1x_4)(1-x_2x_5)(1-x_1x_3x_5)(1-x_2x_3x_4)\\
&=& x_1 x_4 + x_2 x_5 + x_1 x_3 x_5 + x_2 x_3 x_4 - x_1 x_2 x_3 x_4\\
&& \null - x_1 x_2 x_3 x_5 - x_1 x_2 x_4 x_5  - x_1 x_3 x_4 x_5 - x_2 x_3 x_4 x_5\\
&& \null + 2\, x_1 x_2 x_3 x_4 x_5\, ,
\end{eqnarray*}
which reveals the coefficients $d(A)$ of the simple form of the structure function.\qed
\end{example}

\setlength{\unitlength}{4ex}
\begin{figure}[htbp]\centering
\begin{picture}(11,4)
\put(3,0.5){\framebox(1,1){$2$}} \put(3,2.5){\framebox(1,1){$1$}} \put(5,1.5){\framebox(1,1){$3$}} \put(7,0.5){\framebox(1,1){$5$}}
\put(7,2.5){\framebox(1,1){$4$}}%
\put(0,2){\line(1,0){1.5}}\put(1.5,2){\line(2,-1){1.5}}\put(5.5,0){\line(-2,1){1.5}}\put(1.5,2){\line(2,1){1.5}}\put(5.5,4){\line(-2,-1){1.5}}%
\put(0,2){\circle*{0.15}}%
\put(9.5,2){\line(1,0){1.5}}\put(5.5,0){\line(2,1){1.5}}\put(9.5,2){\line(-2,-1){1.5}}\put(5.5,4){\line(2,-1){1.5}}\put(9.5,2){\line(-2,1){1.5}}%
\put(11,2){\circle*{0.15}}%
\put(5.5,0){\line(0,1){1.5}}\put(5.5,4){\line(0,-1){1.5}}
\end{picture}
\caption{Bridge structure} \label{fig:bs}
\end{figure}
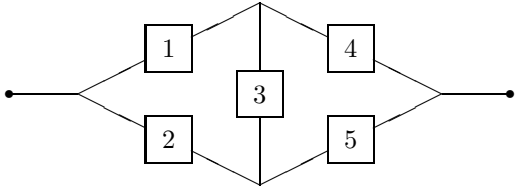

Example~\ref{ex:Bridge} illustrates the important fact that the simple form (\ref{eq:SimpleForm}) of any structure function can be expressed in terms of the minimal path sets of the system simply by expanding the coproduct in (\ref{eq:sa7fd5ds}) and then simplifying the resulting polynomial expression (using $x_i^2=x_i$) until it becomes multilinear. It seems, however, that such a general expression for the structure function is unknown in the literature.

In Section 2 of this note we yield an expression of the simple form of the structure function in terms of the minimal path sets. The derivation of this expression is inspired from the exact computation of the reliability function of the system by means of the inclusion-exclusion principle.
We also provide the dual version of this expression in terms of the minimal cut sets and discuss some interesting consequences of these expressions. In Section 3 we show that the number of minimal path and cut sets consisting of 1 or 2 components can be computed easily from the concept of structure signature of the system.

\section{Structure functions and minimal path and cut sets}

By extending formally the structure function to the hypercube $[0,1]^n$ by linear interpolation, we define its \emph{multilinear extension} (a concept introduced in game theory by Owen \cite{Owe72}) as the multilinear polynomial function $\widehat{\phi}\colon [0,1]^n\to [0,1]$ defined by
$$
\widehat{\phi}(\mathbf{x}) ~=~ \sum_{A\subseteq C}\phi(A)\prod_{i\in A}x_i\prod_{i\in C\setminus A}(1-x_i).
$$

Let $\phi^D\colon\{0,1\}^n\to\{0,1\}$ be the dual structure function defined as $\phi^D(\bfx)=1-\phi(\mathbf{1}-\bfx)$, where $\mathbf{1}$ stands for the $n$-vector $(1,\ldots,1)$, and let $d^D(A)$ be the coefficient of $\prod_{i\in A}x_i$ in the simple form of $\phi^D$. By using the dual structure function we can easily derive various useful forms of the structure function and its multilinear extension (see also Grabisch et al.~\cite{GraMarRou00}). Table~\ref{tab:StrFun} summarizes the best known of these forms (in addition to the minimal path set representation given in Eq.~(\ref{eq:sa7fd5ds})).

\begin{table}[htb]
$$
\begin{array}{|c|}
\hline
\begin{minipage}{0.4\textwidth}
\medskip
\begin{enumerate}
\item Self-descriptive form
$$
\sum_{A\subseteq C}\phi(A)\, \prod_{i\in A}x_i\,\prod_{i\in C\setminus A}(1-x_i)
$$
\item Dual self-descriptive form
$$
1-\sum_{A\subseteq C}\phi^D(A)\prod_{i\in C\setminus A}x_i\prod_{i\in C}(1-x_i)
$$
\item Simple form
$$
\sum_{A\subseteq C}d(A)\prod_{i\in A}x_i
$$
\item Dual simple form
$$
\sum_{A\subseteq C}d^D(A)\coprod_{i\in A}x_i
$$
\item Disjunctive normal form
$$
\coprod_{\textstyle{A\subseteq C\atop\phi(A)=1}}\prod_{i\in A} x_i
$$
\item Conjunctive normal form
$$
\prod_{\textstyle{A\subseteq C\atop\phi^D(A)=1}}\coprod_{i\in A} x_i
$$
\end{enumerate}
\null
\end{minipage}
\\
\hline
\end{array}
$$
\caption{Various forms of the structure function $\phi(\mathbf{x})$ and its multilinear extension $\widehat{\phi}(\mathbf{x})$} \label{tab:StrFun}
\end{table}

The concept of multilinear extension of the structure function has the following important interpretation in reliability theory. When the components are statistically independent, the function $\widehat{\phi}$ is nothing other than the \emph{reliability function} $h\colon [0,1]^n\to [0,1]$, which gives the reliability
$$
h(\mathbf{p}) ~=~ h(p_1,\ldots,p_n) ~=~ \sum_{A\subseteq C}\phi(A)\prod_{i\in A}p_i\prod_{i\in C\setminus A}(1-p_i)
$$
of the system in terms of the reliabilities $p_1,\ldots,p_n$ of the components (see, e.g., \cite[Ch.~2]{BarPro81}).

The exact computation of the system reliability $h(\mathbf{p})$ in terms of the minimal path sets $P_1,\ldots,P_r$ is usually done by means of the inclusion-exclusion method (see, e.g., \cite[Sect.~6.2]{Bar98} and \cite[Ch.~2]{BarPro81}). In this section we recall this method and show how we can adapt it to derive a concise expression of the simple form of the structure function in terms of the minimal path sets of the system.

For every $j\in [r]=\{1,\ldots,r\}$, let $E_j$ be the event that all components in the minimal path set $P_j$ work. Then, using the inclusion-exclusion formula for probabilities, we obtain
\begin{eqnarray}
h(\mathbf{p}) &=& \Pr\bigg(\bigcup_{j\in [r]}E_j\bigg)\nonumber\\
&=& \sum_{\varnothing\neq B\subseteq [r]} (-1)^{|B|-1}{\,}\Pr\bigg(\bigcap_{j\in B}E_j\bigg){\,}.\label{eq:hpppIEm}
\end{eqnarray}
Let $D_i$ denote the event that component $i$ works. Then, we have $p_i=\Pr(D_i)$ and, using the independence assumption, we have
$$
\Pr(E_j) ~=~ \Pr\bigg(\bigcap_{i\in P_j}D_i\bigg) ~=~ \prod_{i\in P_j} p_i
$$
and, more generally,
\begin{eqnarray}
\lefteqn{\Pr\bigg(\bigcap_{j\in B}E_j\bigg) ~=~ \Pr\bigg(\bigcap_{j\in B}{\,}\bigcap_{i\in P_j}D_i\bigg)}\nonumber\\
&=& \Pr\bigg(\bigcap_{i\in\bigcup_{j\in B}P_j}D_i\bigg) ~=~ \prod_{i\in\bigcup_{j\in B}P_j} p_i{\,}.\label{eq:PrCapEj}
\end{eqnarray}
Substituting (\ref{eq:PrCapEj}) in (\ref{eq:hpppIEm}), we obtain the following multilinear expression of $h(\mathbf{p})$ in terms of the minimal path sets of the system
\begin{equation}\label{eq:ReFctMinPSets}
h(\mathbf{p}) ~=~ \sum_{\varnothing\neq B\subseteq [r]} (-1)^{|B|-1}{\,}\prod_{i\in\bigcup_{j\in B}P_j} p_i{\,},
\end{equation}
or equivalently,
\begin{eqnarray*}
h(\mathbf{p}) &=& \sum_{j\in [r]}{\,}\prod_{i\in P_j}p_i {~} - \sum_{\{j,k\}\subseteq [r]}{\,}\prod_{i\in P_j\cup P_k}p_i\\
&& \null + \sum_{\{j,k,l\}\subseteq [r]}{\,}\prod_{i\in P_j\cup P_k\cup P_l}p_i {~} -\mbox{}\cdots
\end{eqnarray*}

We now show that a similar formula can be obtained for the structure function without an appeal to the independence assumption on the system components. Actually, our result and its proof are purely combinatorial and does not need any stochastic setting.

\begin{theorem}\label{thm:mainPrimal}
If $P_1,\ldots, P_r$ denote the minimal path sets of the system, then
\begin{equation}\label{eq:sa7fd5ds2}
\phi(\bfx) ~=~ \sum_{\varnothing\neq B\subseteq [r]}(-1)^{|B|-1}\prod_{i\in\bigcup_{j\in B}P_j}x_i{\,},
\end{equation}
or equivalently,
\begin{eqnarray*}
\phi(\bfx) &=& \sum_{j\in [r]}{\,}\prod_{i\in P_j}x_i {~} - \sum_{\{j,k\}\subseteq [r]}{\,}\prod_{i\in P_j\cup P_k}x_i\\
&& \null + \sum_{\{j,k,l\}\subseteq [r]}{\,}\prod_{i\in P_j\cup P_k\cup P_l}x_i {~} -\mbox{}\cdots
\end{eqnarray*}
\end{theorem}

Clearly, Eq.~(\ref{eq:sa7fd5ds2}) still holds on $[0,1]^n$ if we replace the structure function with its multilinear extension. In particular, when the components are statistically independent, we see that (\ref{eq:ReFctMinPSets}) immediately follows from (\ref{eq:sa7fd5ds2}).

\begin{example}\label{ex:sad75f55}
Consider a $4$-component system defined by the three minimal path sets $P_1=\{1,2\}$, $P_2=\{2,3\}$, and $P_3=\{3,4\}$.
\begin{table}[htbp]
$$
\begin{array}{|c|c|c|}
\hline
B & (-1)^{|B|-1} & \bigcup_{j\in B}P_j^{\mathstrut}\\
\hline
\{1\} & 1 & \{1,2\}\\
\{2\} & 1 & \{2,3\}\\
\{3\} & 1 & \{3,4\}\\
\{1,2\} & -1 & \{1,2,3\}\\
\{1,3\} & -1 & \{1,2,3,4\}\\
\{2,3\} & -1 & \{2,3,4\}\\
\{1,2,3\} & 1 & \{1,2,3,4\}\\
\hline
\end{array}
$$
\caption{Example~\ref{ex:sad75f55}}
\label{tab:113}
\end{table}
The constituting elements of the sum in Eq.~(\ref{eq:sa7fd5ds2}) are gathered in Table~\ref{tab:113}. Summing up the monomials defined by the subsets given in the third column, each multiplied by the corresponding number ($+1$ or $-1$) from the second column, by (\ref{eq:sa7fd5ds2}) we obtain
$$
\phi(x_1,x_2,x_3,x_4) ~=~ x_1x_2+x_2x_3+x_3x_4-x_1x_2x_3-x_2x_3x_4{\,},
$$
which is the simple form of the structure function.\qed
\end{example}


Interestingly, Theorem~\ref{thm:mainPrimal} enables us to identify the minimal path sets of the system from the simple form of the structure function by quick inspection. We state this result in the following immediate corollary.

\begin{corollary}\label{cor:dsa67}
The minimal path sets $P_1,\ldots, P_r$ are exactly the minimal elements (with respect to inclusion) of the family of subsets defined by the monomials (or equivalently, the monomials with coefficient $+1$) in the simple form of the structure function.
\end{corollary}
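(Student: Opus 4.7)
My plan is to extract the coefficients $d(A)$ of the simple form directly from Theorem~\ref{thm:mainPrimal} and then inspect the family $\mathcal{F}=\{A\subseteq C:d(A)\neq 0\}$. Collecting like terms in the alternating sum of Theorem~\ref{thm:mainPrimal} yields
$$
d(A) ~=~ \sum_{\substack{\varnothing\neq B\subseteq [r]\\ \bigcup_{j\in B}P_j=A}}(-1)^{|B|-1},
$$
so $A\in\mathcal{F}$ requires the existence of at least one nonempty $B\subseteq [r]$ with $\bigcup_{j\in B}P_j=A$. In particular every $A\in\mathcal{F}$ contains some minimal path set $P_k$ and is therefore itself a path set by monotonicity of $\phi$.

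The next step is to check that every minimal path set $P_j$ satisfies $d(P_j)=+1$. If $\bigcup_{k\in B}P_k=P_j$, then each $k\in B$ has $P_k\subseteq P_j$, and the minimality of $P_j$ forces $P_k=P_j$, hence $k=j$; thus $B=\{j\}$ is the unique such $B$ and it contributes $+1$. This simultaneously shows $P_j\in\mathcal{F}$ and takes care of the parenthetical \emph{``coefficient $+1$''} reformulation in the statement.

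To finish I would argue both directions of the asserted equality. No proper subset $A\varsubsetneq P_j$ lies in $\mathcal{F}$, because any such $A$ fails to be a path set by the very definition of a minimal path set; hence each $P_j$ is minimal in $\mathcal{F}$. Conversely, if $A$ is any minimal element of $\mathcal{F}$, then by the first step $A$ is a path set and contains some minimal path set $P_k$; since $P_k\in\mathcal{F}$ with $P_k\subseteq A$, minimality forces $A=P_k$.

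The only genuinely combinatorial point is the passage from the unreduced sum in Theorem~\ref{thm:mainPrimal} to the value of the coefficient $d(A)$ after multilinear collection, i.e., the observation that $A$ must be realized as a union of minimal path sets to appear with a nonzero coefficient. Once this is in place the remainder is routine and uses only the pairwise incomparability of minimal path sets.
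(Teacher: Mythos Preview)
Your argument is correct and complete. The paper itself gives no proof, calling the corollary ``immediate'' from Theorem~\ref{thm:mainPrimal}; your write-up simply spells out the intended reasoning---extracting the formula $d(A)=\sum_{\varnothing\neq B\subseteq [r],\,\bigcup_j P_j=A}(-1)^{|B|-1}$ (which the paper states just after the corollary), observing that any $A$ with $d(A)\neq 0$ must contain a minimal path set, and using incomparability of the $P_j$ to get $d(P_j)=+1$---so there is nothing to contrast.
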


Corollary~\ref{cor:dsa67} enables us to reconstruct the minimal path set representation (\ref{eq:sa7fd5ds}) of the structure function from its simple form. Considering for instance the simple form of the bridge structure function as described in Example~\ref{ex:Bridge}, by Corollary~\ref{cor:dsa67} we see that the corresponding minimal path sets are $P_1=\{1,4\}$, $P_2=\{2,5\}$, $P_3=\{1,3,5\}$, and $P_4=\{2,3,4\}$. We then immediately retrieve Eq.~(\ref{eq:BridgeAmalg}).

Theorem~\ref{thm:mainPrimal} has the following additional consequence. Recall that a \emph{formation} of a subset $A$ of $C$ is a collection of minimal path sets whose union is $A$. A formation of $A$ is said to be \emph{odd} (resp.\ \emph{even}) if it is the union of an odd (resp.\ even) number of minimal path sets. Note that a particular formation can be both odd and even. By equating the corresponding terms in (\ref{eq:SimpleForm}) and (\ref{eq:sa7fd5ds2}), we obtain the following identity
$$
d(A) ~=~ \sum_{\textstyle{\varnothing\neq B\subseteq [r]\atop \bigcup_{j\in B}P_j=A}}(-1)^{|B|-1}.
$$
From this identity, we immediately retrieve the important fact (see, e.g., Barlow and Iyer~\cite{BarIye88}) that the coefficient $d(A)$ is exactly the number of odd formations of $A$ minus the number of even formations of $A$.

A dual argument enables us to yield an expression of the simple form of the structure function in terms of the minimal cut sets of the system. Recall that a subset $K$ of $C$ is a \emph{cut set} of the system if $\phi(C\setminus K)=0$. It is \emph{minimal} if $\phi(C\setminus K')=1$ for every $K'\varsubsetneq K$. If $K_1,\ldots,K_s$ denote the minimal cut sets of the system, then
$$
\phi(\bfx) ~=~ \prod_{j\in [s]}\coprod_{i\in K_j}x_i ~=~ \prod_{j\in [s]}\bigg(1-\prod_{i\in K_j}(1-x_i)\bigg).
$$

Starting from the well-known fact that the minimal cut sets of the system are the minimal path sets of the dual, and vice versa, from Theorem~\ref{thm:mainPrimal} and Corollary~\ref{cor:dsa67} we immediately derive the following dual versions.

\begin{theorem}\label{thm:mainDual}
If $K_1,\ldots, K_s$ denote the minimal cut sets of the system, then
\begin{equation}\label{eq:sa7fd5ds2d}
\phi^D(\bfx) ~=~ \sum_{\varnothing\neq B\subseteq [s]}(-1)^{|B|-1}\prod_{i\in\bigcup_{j\in B}K_j}x_i{\,}.
\end{equation}
\end{theorem}

\begin{corollary}\label{cor:dsa671}
The minimal cut sets $K_1,\ldots, K_s$ are exactly the minimal elements (with respect to inclusion) of the family of subsets defined by the monomials (or equivalently, the monomials with coefficient $+1$) in the simple form of the dual structure function.
\end{corollary}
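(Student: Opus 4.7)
The plan is to derive this corollary as a direct dual of Corollary~\ref{cor:dsa67}, using the standard duality between the system $(C,\phi)$ and its dual $(C,\phi^D)$. Since the statement concerns the simple form of $\phi^D$ and the minimal cut sets of $\phi$, and since Corollary~\ref{cor:dsa67} concerns the simple form of a structure function and the minimal path sets of the associated system, the natural move is simply to apply Corollary~\ref{cor:dsa67} to the dual system and then translate the conclusion back in terms of the primal system.

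More concretely, first I would recall the well-known duality fact already mentioned in the paper: the minimal cut sets $K_1,\ldots,K_s$ of $(C,\phi)$ are exactly the minimal path sets of the dual semicoherent system $(C,\phi^D)$. Applying Corollary~\ref{cor:dsa67} to $\phi^D$ in place of $\phi$ then yields directly that these minimal path sets of $\phi^D$, and therefore the minimal cut sets $K_1,\ldots,K_s$ of $\phi$, are exactly the minimal elements (with respect to inclusion) of the family of subsets $A\subseteq C$ for which the monomial $\prod_{i\in A}x_i$ appears in the simple form $\sum_{A\subseteq C}d^D(A)\prod_{i\in A}x_i$ of $\phi^D$, i.e., for which $d^D(A)\neq 0$.

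For the parenthetical equivalence, I would invoke Theorem~\ref{thm:mainDual} applied to $\phi^D$: the analogue of Eq.~(\ref{eq:sa7fd5ds2}) for the dual expresses $\phi^D$ as a signed sum $\sum_{\varnothing\neq B\subseteq[s]}(-1)^{|B|-1}\prod_{i\in\bigcup_{j\in B}K_j}x_i$, and the singleton terms $B=\{j\}$ contribute the monomials $\prod_{i\in K_j}x_i$ with coefficient $+1$. Since each $K_j$ is inclusion-minimal among the sets $\bigcup_{j\in B}K_j$, no cancellation from larger $B$ can affect the coefficient of $\prod_{i\in K_j}x_i$, so this coefficient is exactly $+1$ in the simple form. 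Hence restricting attention to monomials with coefficient $+1$ still retrieves all the minimal cut sets among the minimal elements.

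There is essentially no obstacle: the entire proof is a translation through the duality $\phi\leftrightarrow\phi^D$. The only mildly delicate point, which I have just outlined, is the justification of the ``or equivalently'' clause, since a priori some non-minimal monomials could carry coefficient $+1$ as well; but they cannot be minimal elements of the family, so the characterization of the $K_j$ as the minimal elements of either family is the same.
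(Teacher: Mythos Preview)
Your proposal is correct and follows exactly the route the paper itself takes: the paper states that Theorem~\ref{thm:mainDual} and Corollary~\ref{cor:dsa671} are ``immediately derived'' from Theorem~\ref{thm:mainPrimal} and Corollary~\ref{cor:dsa67} via the fact that the minimal cut sets of $\phi$ are the minimal path sets of $\phi^D$, and this is precisely your argument. Your added justification of the parenthetical ``or equivalently'' clause (that each $K_j$ carries coefficient exactly $+1$ because no nonsingleton $B$ can have $\bigcup_{j'\in B}K_{j'}=K_j$) is a welcome detail the paper leaves implicit; note only that your phrase ``Theorem~\ref{thm:mainDual} applied to $\phi^D$'' is slightly redundant, since that theorem already concerns $\phi^D$ directly.
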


Locks~\cite{Loc78} described a method for generating all minimal cut sets from the set of minimal path sets (and vice versa) using Boolean algebra. Interestingly, an alternative method simply consists in applying Corollary~\ref{cor:dsa671} to the dual structure function
$$
\phi^D(\bfx) ~=~ 1-\phi(\mathbf{1}-\bfx) ~=~ \prod_{j\in [r]}\coprod_{i\in P_j}x_i{\,}.
$$
Consider for instance the structure function defined in Example~\ref{ex:sad75f55}. The dual structure function is given by
\begin{eqnarray*}
\phi^D(\bfx) &=& (x_1\amalg x_2)(x_2\amalg x_3)(x_3\amalg x_4)\\
&=& x_1x_3+x_2x_3+x_2x_4-x_1x_2x_3-x_2x_3x_4{\,}.
\end{eqnarray*}
Corollary~\ref{cor:dsa671} then immediately yields the minimal cut sets of the system, namely $K_1=\{1,3\}$, $K_2=\{2,3\}$, and $K_3=\{2,4\}$.

\begin{remark}
It is noteworthy that from (\ref{eq:hpppIEm}) and (\ref{eq:PrCapEj}) we can immediately derive a representation of the system reliability function in terms of the reliability functions of the series systems defined from the unions of minimal path sets. More precisely, for every $t>0$ we have
$$
\Pr(T>t) ~=~ \sum_{\varnothing\neq B\subseteq [r]} (-1)^{|B|-1}{\,}\Pr\bigg(\min_{i\in\bigcup_{j\in B}P_j}T_i>t\bigg){\,},
$$
where $T$ and $T_i$ denote the lifetime of the system and the lifetime of component $i$, respectively. This representation, which holds regardless of the distribution of the component lifetimes, has been obtained for instance in \cite[Eq.~(3.1)]{BloLiSav03}, \cite[Eq.~(2.2)]{NavAguSorSua14}, and \cite[Eq.~(3.4)]{NavRuiSan07}. The corresponding dual version can be easily derived by considering parallel systems and minimal cut sets; see for instance \cite[Eq.~(3.5)]{NavRuiSan07}. For every $t>0$ we have
$$
\Pr(T\leqslant t) ~=~ \sum_{\varnothing\neq B\subseteq [s]} (-1)^{|B|-1}{\,}\Pr\bigg(\max_{i\in\bigcup_{j\in B}K_j}T_i\leqslant t\bigg){\,}.
$$
\end{remark}

\section{Minimal path and cut sets of small sizes}

By identifying the variables $x_1,\ldots,x_n$ in the multilinear extension $\widehat{\phi}(\bfx)$ of the structure function, we define its diagonal section $\widehat{\phi}(x,\ldots,x)$, which will be simply denoted by $\widehat{\phi}(x)$. From the simple form (\ref{eq:SimpleForm}) of the structure function, we immediately obtain the polynomial function
$$
\widehat{\phi}(x) ~=~ \sum_{k=1}^n d_k\, x^{k}{\,},
$$
where
$$
d_k ~=~ \sum_{\textstyle{A\subseteq C\atop |A|=k}}d(A){\,}.
$$
For instance, considering the bridge structure function defined in Example~\ref{ex:Bridge}, we obtain $\widehat{\phi}(x) = 2x^2+2x^3-5x^4+2x^5$.

By definition, the diagonal section of the multilinear extension of the structure function is also the one-variable \emph{reliability function} $h\colon [0,1]\to [0,1]$ which gives the system reliability $h(p)=h(p,\ldots,p)$ of the system whenever the components are statistically independent and have the same reliability $p$.

Using Theorem~\ref{thm:mainPrimal}, we can easily express the function $\widehat{\phi}(x)$ in terms of the minimal path sets. We simply have
$$
\widehat{\phi}(x) ~=~ \sum_{\varnothing\neq B\subseteq [r]}(-1)^{|B|-1}{\,}x^{|\bigcup_{j\in B}P_j|}
$$
and the coefficient $d_k$ of $x^k$ in $\widehat{\phi}(x)$ is then given by
\begin{equation}\label{eq:ai777}
d_k ~=~ \sum_{\textstyle{B\subseteq [r]\atop |\bigcup_{j\in B}P_j|=k}}(-1)^{|B|-1}{\,}.
\end{equation}
Dually, the coefficient $d^D_k$ of $x^k$ in $\widehat{\phi}^D(x)$ is given by
\begin{equation}\label{eq:bi777}
d_k^D ~=~ \sum_{\textstyle{B\subseteq [s]\atop |\bigcup_{j\in B}K_j|=k}}(-1)^{|B|-1}{\,}.
\end{equation}

For every $k\in [n]$, let $\alpha_k$ (resp.\ $\beta_k$) denote the number of minimal path (resp.\ cut) sets of size $k$ of the system. The knowledge of these numbers for small $k$ may be relevant when analyzing the reliability of the system. For instance, if the system has no minimal cut set of size $1$, it may be informative to count the number $\beta_2$ of minimal cut sets of size $2$ and so forth.

The following proposition shows that $\alpha_1$ and $\alpha_2$ (resp.\ $\beta_1$ and $\beta_2$) can be computed directly from the coefficients $d_1$ and $d_2$ (resp.\ $d_1^D$ and $d_2^D$), and vice versa.

\begin{proposition}\label{prop:Conv}
We have $\alpha_1 = d_1$, $\beta_1 = d^D_1$, $\alpha_2 = {d_1\choose 2}+d_2$, and $\beta_2 = {d^D_1\choose 2}+d^D_2$.
\end{proposition}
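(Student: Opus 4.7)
The plan is to exploit formulas (\ref{eq:ai777}) and (\ref{eq:bi777}) directly, using the fact that minimal path (resp.\ cut) sets form an antichain with respect to inclusion. The identities $\alpha_1=d_1$ and $\beta_1=d_1^D$ on the one hand, and $\alpha_2=\binom{d_1}{2}+d_2$ and $\beta_2=\binom{d_1^D}{2}+d_2^D$ on the other, are dual under the substitution $P_j\leftrightarrow K_j$, so I will only carry out the argument for $\alpha_1$ and $\alpha_2$ and then invoke duality.

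For $d_1$, I apply (\ref{eq:ai777}) with $k=1$. The contribution of $B\subseteq[r]$ requires $|\bigcup_{j\in B}P_j|=1$. Since the $P_j$ are distinct nonempty subsets, any $B$ with $|B|\geq 2$ whose union is a singleton would force all the $P_j$ ($j\in B$) to be equal, which is impossible. Hence only singletons $B=\{j\}$ with $|P_j|=1$ survive, and each contributes $+1$, giving $d_1=\alpha_1$.

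For $d_2$, I apply (\ref{eq:ai777}) with $k=2$ and split according to $|B|$. If $|B|=1$, then $B=\{j\}$ with $|P_j|=2$, contributing $\alpha_2$. If $|B|=2$, write $B=\{j,k\}$; since $P_j,P_k$ are distinct and neither contains the other (minimality), both must be proper subsets of the 2-element union $P_j\cup P_k$, forcing them to be distinct singletons among the $\alpha_1$ minimal path sets of size $1$. This gives $\binom{\alpha_1}{2}$ such pairs, each contributing $-1$. If $|B|\geq 3$, then the $P_j$ ($j\in B$) form an antichain of subsets of a $2$-element set, but such an antichain has size at most $2$, a contradiction. Summing, $d_2=\alpha_2-\binom{\alpha_1}{2}$, so $\alpha_2=\binom{d_1}{2}+d_2$.

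The dual identities $\beta_1=d_1^D$ and $\beta_2=\binom{d_1^D}{2}+d_2^D$ follow by applying exactly the same argument to the dual structure function, whose minimal path sets are the $K_j$, via formula (\ref{eq:bi777}). The only nontrivial step is the case analysis on $|B|$ in the computation of $d_2$, and the main point to be careful about there is that minimality of the $P_j$ rules out inclusions, forcing the union-size constraint $|P_j\cup P_k|=2$ to be achieved only by two distinct singletons; once this is observed, everything else is bookkeeping.
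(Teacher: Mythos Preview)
Your proof is correct and follows essentially the same approach as the paper: both apply Eq.~(\ref{eq:ai777}) at $k=1$ and $k=2$, split the sum according to $|B|$, and identify the contributions as $\alpha_2$ and $-\binom{\alpha_1}{2}$ before passing to the dual. You make explicit the antichain reasoning (why $|B|\geq 3$ is impossible and why $|B|=2$ forces two distinct singleton path sets) that the paper leaves to the reader.
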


The following example shows that, in general, for $k\geqslant 3$ neither $\alpha_k$ nor $\beta_k$ can be determined only from the coefficients of $\widehat{\phi}(x)$ and $\widehat{\phi}^D(x)$.

\begin{example}
Consider the structure functions
\begin{eqnarray}
\phi_1(\bfx) &=& x_1 x_2 \amalg x_3 x_4\nonumber\\
&=& x_1x_2+x_3x_4-x_1x_2x_3x_4\label{eq:ex-1}
\end{eqnarray}
and
\begin{eqnarray*}
\phi_2(\bfx) &=& x_1 x_2 \amalg x_1 x_3 \amalg x_2 x_3 x_4 \\
&=& x_1x_2+x_1x_3+x_2x_3x_4-x_1x_2x_3-x_1x_2x_3x_4{\,}.
\end{eqnarray*}
We have $\widehat{\phi}_1(x)=\widehat{\phi}_2(x)=2x^2-x^4$ and hence $\widehat{\phi}^D_1(x)=1-\widehat{\phi}_1(1-x)=1-\widehat{\phi}_2(1-x)=\widehat{\phi}^D_2(x)$. However, we clearly have $\alpha_3=0$ for function $\phi_1$ and $\alpha_3=1$ for function $\phi_2$.
\end{example}

\begin{remark}
Contrary to the number $\alpha_k$, the number of path sets (not necessarily minimal) of size $k$ can always be determined from the coefficients of $\widehat{\phi}(x)$. Indeed, this number is given by (see, e.g., \cite[Prop.~1]{Marb})
$$
\sum_{\textstyle{A\subseteq C\atop |A|=k}}\phi(A) ~=~ \sum_{j=0}^k {n-j\choose k-j}{\,}d_j{\,}.
$$
Dually, the number of cut sets of size $k$ is given by
$$
\sum_{\textstyle{A\subseteq C\atop |A|=k}}\phi^D(A) ~=~ \sum_{j=0}^k {n-j\choose k-j}{\,}d_j^D{\,}.
$$
\end{remark}

We end this section by giving expressions for the numbers $\alpha_1$, $\alpha_2$, $\beta_1$, and $\beta_2$ in terms of the structure signature of the system.

Recall that the \emph{structure signature} of the system is the $n$-vector $\bfs=(s_1,\ldots,s_n)$ whose $k$-th coordinate is defined as
\begin{equation}\label{eq:asad678}
s_k ~=~ \sum_{\textstyle{A\subseteq C\atop |A|=n-k+1}}\frac{1}{{n\choose |A|}}\,\phi(A)-\sum_{\textstyle{A\subseteq C\atop |A|=n-k}}\frac{1}{{n\choose |A|}}\,\phi(A)\, ,
\end{equation}
or equivalently,
\begin{equation}\label{eq:Ss2}
s_k ~=~ \sum_{j=1}^{n-k+1}\frac{{n-k\choose j-1}}{{n\choose j}}{\,} d_j{\,}.
\end{equation}

This concept was introduced in 1985 by Samaniego~\cite{Sam85} for systems whose components have continuous and i.i.d.\ lifetimes. He originally defined $s_k$ as the probability that the $k$-th component failure causes the system to fail (hence the property $\sum_{k=1}^n s_k=1$). More recently, Boland~\cite{Bol01} showed that this probability can be explicitly given by (\ref{eq:asad678}). The expression given in (\ref{eq:Ss2}) was derived later in \cite[Cor.~12]{Mar14} and \cite[Prop.~3]{Marb} (see also \cite{MarMatb} for a preliminary work). Thus defined, the structure signature depends only on the structure function and can actually be considered for any system, without any assumption on the distribution of the component lifetimes.

Combining this concept with Proposition~\ref{prop:Conv} shows that $\alpha_1$ and $\alpha_2$ (resp.\ $\beta_1$ and $\beta_2$) can be computed directly from $s_n$ and $s_{n-1}$ (resp.\ $s_1$ and $s_2$), and vice versa. The conversion formulas are given in the following proposition.

\begin{proposition}\label{prop:Conv2}
We have
\begin{eqnarray}
\alpha_1 &=& d_1 ~=~ n s_n{\,},\label{eq:tz1}\\
\beta_1 &=& d^D_1 ~=~ n s_1{\,},\label{eq:tz2}\\
\alpha_2 &=& \textstyle{{d_1\choose 2}+d_2} ~=~ \textstyle{{n s_n\choose 2}+{n\choose 2} (s_{n-1}-s_n)}{\,},\label{eq:tz3}\\
\beta_2 &=& \textstyle{{d^D_1\choose 2}+d^D_2} ~=~ \textstyle{{n s_1\choose 2}+{n\choose 2} (s_2-s_1)}{\,}.\label{eq:tz4}
\end{eqnarray}
\end{proposition}

For instance, consider again the structure function $\phi_1$ given in (\ref{eq:ex-1}). We have $\widehat{\phi}_1(x)=2x^2-x^4$, $\widehat{\phi}_1^D(x)=4x^2-4x^3+x^4$, and therefore $d_1=d_1^D=0$, $d_2=2$, and $d_2^D=4$. Using Proposition~\ref{prop:Conv2}, we finally obtain $\alpha_1=\beta_1=0$, $\alpha_2=2$, $\beta_2=4$, and $\bfs=(0,\frac{2}{3},\frac{1}{3},0)$.

\begin{example}
Consider an $n$-component system having $\beta_2$ minimal cut sets of size $2$ and no cut set of size $1$. By (\ref{eq:tz2}) and (\ref{eq:tz4}) we necessarily have $s_1=0$ and $s_2=\beta_2/{n\choose 2}$. This result was expected since $s_k$ is the probability that, assuming that the component lifetimes are continuous and i.i.d.\ (and hence exchangeable), the system fails exactly at the $k$-th component failure. Thus, $s_1$ is clearly zero and $s_2$ is the ratio of the number $\beta_2$ of minimal cut sets of size $2$ (favorable cases) over the number ${n\choose 2}$ of pairs of components (possible cases).\qed
\end{example}

\begin{remark}\label{rem:45krwe}
It is noteworthy that, combining (\ref{eq:ai777}) with (\ref{eq:Ss2}), we obtain a simple way to compute the structure signature of the system directly from the minimal path sets. Dually, combining (\ref{eq:bi777}) with the immediate formula (see also \cite[Sect.~3.5]{Marb})
$$
s_k ~=~ \sum_{j=1}^k\frac{{k-1\choose j-1}}{{n\choose j}}{\,} d_j^D{\,}
$$
shows how we can compute the structure signature directly from the minimal cut sets.
\end{remark}

\begin{remark}
One can easily show that, when the components have exchangeable lifetimes, the system reliability function can be expressed as
$$
\Pr(T>t) ~=~ \sum_{k=1}^n d_k{\,}\Pr\big(\min\{T_1,\ldots,T_k\}>t\big){\,}.
$$
This result shows that the $n$-vector $\bfd=(d_1,\ldots,d_n)$ can be interpreted as a signature vector, called ``minimal signature'' in \cite[Def.~4.1]{NavRuiSan07}. The structure signature $\bfs$ can then be computed from this minimal signature by using (\ref{eq:Ss2}) and vice versa (see, e.g., \cite{Marb}). Dually, one can show that
$$
\Pr(T\leqslant t) ~=~ \sum_{k=1}^n d_k^D{\,}\Pr\big(\max\{T_1,\ldots,T_k\}\leqslant t\big){\,},
$$
and this formula then shows that the coefficients $d_1^D,\ldots,d_n^D$ can be used to define the ``maximal signature'' \cite[Def.~4.2]{NavRuiSan07}, which also determines the structure signature $\bfs$ (as indicated in Remark~\ref{rem:45krwe}) and the $n$-vector $\bfd$ (and vice versa).
\end{remark}

\section*{Acknowledgments}

This research is supported by the internal research project F1R-MTH-PUL-15MRO3 of the University of Luxembourg.

\appendix\section*{Appendix}

\begin{proof}[Proof of Theorem~\ref{thm:mainPrimal}]
The proof relies on the classical polynomial inclusion-exclusion identity
$$
1-\prod_{j\in [r]}(1-z_j)=\sum_{\varnothing\neq B\subseteq [r]}(-1)^{|B|-1}\prod_{j\in B}z_j{\,},
$$
which holds for all $z_1,\ldots,z_n\in\R$. Setting $z_j=\prod_{i\in P_j}x_i$ in the latter identity and then combining the resulting formula with the right-hand expression in (\ref{eq:sa7fd5ds}), we immediately obtain
$$
\phi(\bfx) ~=~ \sum_{\varnothing\neq B\subseteq [r]}(-1)^{|B|-1}\prod_{j\in B}{\,}\prod_{i\in P_j}x_i.
$$
Formula (\ref{eq:sa7fd5ds2}) then follows by simplifying the latter expression using $x_i^2=x_i$.
\end{proof}

\begin{proof}[Proof of Proposition~\ref{prop:Conv}]
On the one hand, setting $k=1$ in (\ref{eq:ai777}) and (\ref{eq:bi777}) shows that $d_1=\alpha_1$ and $d_1^D=\beta_1$. On the other hand, setting $k=2$ in (\ref{eq:ai777}), we obtain
$$
d_2 ~=~ \big|\big\{i\in [r]:|P_i|=2\big\}\big| - \big|\big\{\{i,j\}\subseteq [r]:|P_i\cup P_j|=2\big\}\big|{\,},
$$
that is $d_2=\alpha_2-{\alpha_1\choose 2}$. Dually, we obtain $d^D_2=\beta_2-{\beta_1\choose 2}$.
\end{proof}

\begin{proof}[Proof of Proposition~\ref{prop:Conv2}]
From Eq.~(\ref{eq:Ss2}) we immediately derive the equations $d_1 = ns_n$ and $d_2 = {n\choose 2}(s_{n-1}-s_n)$. Now, if $\bfs^D=(s_1^D,\ldots,s_n^D)$ denotes the structure signature associated with the dual structure function $\phi^D$, then we have $s_k^D=s_{n+1-k}$ for $k=1,\ldots,n$. Combining this observation with the previous two equations, we obtain immediately $d_1^D = ns_1$ and $d_2^D = {n\choose 2}(s_2-s_1)$. We then conclude by Proposition~\ref{prop:Conv}.
\end{proof}

\noindent\textbf{Jean-Luc Marichal} is currently an associate professor in the Mathematics Research Unit at the University of Luxembourg. He received his Ph.D. in Mathematics at the University of Li{\`e}ge (Belgium) in 1998. His research area mainly includes aggregation function theory, functional equations, non-additive measures and integrals, conjoint measurement theory, cooperative game theory, and system reliability theory.

\end{document}